\documentclass[11pt]{article}

\usepackage[margin=1in]{geometry}
\usepackage{amsmath,amssymb,amsfonts,amsthm}
\usepackage{bm}
\usepackage{graphicx}
\usepackage{booktabs}
\usepackage{hyperref}
\usepackage{enumitem}
\usepackage{float}
\usepackage{mathtools}
\usepackage{url}

\hypersetup{
  colorlinks=true,
  linkcolor=blue,
  citecolor=blue,
  urlcolor=blue
}

\newtheorem{proposition}{Proposition}
\newtheorem{remark}{Remark}
\newtheorem{definition}{Definition}

\title{A DEX Implied Volatility Proxy}
\author{Amy Oumayma Khaldoun \\ \texttt{amy.khaldoun@protonmail.com}, \texttt{amy@sqv3.com}}
\date{July 2026}

\begin{document}

\maketitle

\begin{abstract}
Narrow Uniswap v3 liquidity ranges resemble short dated options, and \href{https://panoptic.xyz/}{Panoptic's}\footnote{Panoptic is a decentralized, oracle free perpetual options protocol built on top of Uniswap v3.} streaming premium echoes the short maturity concentration of Black--Scholes theta near the strike. This motivates a natural question: can implied volatility be extracted from Uniswap v3/v4 and Panoptic using only on chain observables? A direct identification of theta with realized fee income is too strong, since fee income captures only the compensation leg of a narrow range LP position. The remaining leg, the cost of dynamically hedging the LP's negative convexity through arbitrage aligned trades, is formalized elsewhere as predictable loss or loss versus rebalancing and is not observable from fees alone. We therefore reformulate the object of interest. We derive
\[
\sigma_{\mathrm{fee}}
=
2\,\mathrm{feeRate}\sqrt{\frac{\mathrm{Volume}}{L_{\mathrm{tick}}}},
\]
and interpret it as a \emph{DEX native fee implied volatility proxy}: an observable, oracle free measure of fee flow relative to active liquidity. We show that recovering a structural latent volatility from pool observables alone is not identified in general, since aggregate volume mixes informed and uninformed flow while the missing hedging cost term depends on external price dynamics and arbitrage timing. What the pool data support directly is a fee implied activity index, not a Black--Scholes consistent implied volatility.

\end{abstract}
\noindent\textit{Keywords:} DeFi; AMM; Uniswap; implied volatility; concentrated liquidity; loss versus rebalancing; options pricing.

\noindent\textit{JEL Classification:} G12, G13, G23
\newpage
\tableofcontents

\section{Introduction}

Implied volatility is a key state variable in options markets. On centralized venues, it is backed out from traded option premia by inverting a pricing model such as Black--Scholes \cite{BlackScholes1973,Merton1973,Gatheral2006VolSurface}. That procedure presupposes quoted option prices, a notion of maturity, and a well defined mapping from the observed premium to a volatility parameter.

On Uniswap v3 and in Panoptic, the economic primitives are different. Liquidity is concentrated over price ranges rather than posted in a limit order book, options are implemented through relocations of liquidity, and Panoptic premia are streamed through time rather than paid entirely upfront \cite{Adams2021UniswapV3,Lambert2022Panoptic,PanopticDocsPerps}, part of a broader and fast growing landscape of on chain options infrastructure \cite{Castle2023DeFiOptions,Polygon2022OnChainOptions}. This naturally suggests the following question. Can one construct a DEX native analogue of implied volatility directly from on chain pool data, without any external oracle?

A first answer is suggested by Panoptic's streaming premium philosophy. Narrow liquidity ranges in Uniswap v3 behave like short dated option positions. Meanwhile, the Black--Scholes theta of a short dated option becomes sharply concentrated around the strike as maturity goes to zero. Matching this concentrated theta mass to the fee flow earned by a narrow range LP yields a simple closed form expression involving the fee tier, traded volume, and active liquidity.

The simplicity of that formula is attractive, but the identification step requires care. The full economic cost of the embedded short option is not equal to the fees collected by the LP. Fees are only the compensation leg. The opposing leg is the cost of dynamically hedging the LP's negative convexity, realized through arbitrage aligned trades that correct stale AMM prices after exogenous price moves. In the recent AMM literature this missing component is formalized as predictable loss and closely related notions such as loss versus rebalancing \cite{Cartea2023PredictableLoss}. Consequently, the formula obtained by matching theta to fees should not be interpreted as a structural Black--Scholes implied volatility without additional assumptions.

This paper keeps the useful part of the construction while correcting the interpretation. Our contributions are threefold.
\begin{enumerate}[leftmargin=2em]
  \item We retain the short maturity theta concentration argument and show that it naturally motivates a fee normalized observable
  \[
  \sigma_{\mathrm{fee}}
  =
  2\,\mathrm{feeRate}\sqrt{\frac{\mathrm{Volume}}{L_{\mathrm{tick}}}}.
  \]
  \item We show that this quantity is best interpreted as a \emph{fee implied volatility proxy}, or more plainly a fee based activity index, not as a structural latent volatility parameter.
  \item We identify the precise obstruction to structural identification from pool observables alone. Aggregate volume mixes informed and uninformed flow, while the missing hedging cost term depends on external price dynamics and arbitrage timing.
\end{enumerate}

For Panoptic, this reformulation is still useful. A fee implied activity index can inform premium scaling, risk controls, pool comparisons, and empirical monitoring, even if it should not be presented as a fully model consistent CEX style implied volatility.

\section{Context}
\label{sec:background}

\subsection{Implied volatility in centralized markets}

In classical option pricing, the Black--Scholes formula gives the value of a European call or put as a function of spot price $S$, strike $K$, maturity $T$, risk free rate $r$, and volatility $\sigma$. If one observes a market price $C_{\mathrm{mkt}}$ or $P_{\mathrm{mkt}}$, the implied volatility is defined by inversion:
\[
C_{\mathrm{BS}}(S,K,T,r,\sigma_{\mathrm{IV}})
=
C_{\mathrm{mkt}}.
\]
This definition is meaningful because the option premium is directly observed.

On AMM based DEXs, no such premium is quoted for a standard vanilla contract. The economic object that is observed is instead the trading and fee flow through a liquidity pool. Any DEX native volatility notion must therefore be interpreted relative to these observables rather than as a literal replacement for order book implied volatility.

The idea of estimating volatility from order flow rather than from quoted premia is not new to DEXs. A separate strand of market microstructure literature has studied volatility estimation directly from limit order book dynamics \cite{Bibinger2014LOBVol} and the relationship between trading volume and price volatility in traditional order driven markets \cite{Naes2006VolumeVol}. The construction in this paper can be read as an AMM native instance of that broader tradition, adapted to a setting where the relevant observable is pool fee flow rather than book depth or order arrival.

\subsection{Concentrated liquidity and Panoptic}

Uniswap v3 allows liquidity providers to choose a finite active interval $[P_a,P_b]$ and earn fees only while price remains in range \cite{Adams2021UniswapV3,UniswapDocsConcentrated}. Very narrow ranges behave like range orders and are economically close to option positions \cite{Hashemseresht2022Concentrated,Fan2023StrategicLP}. Panoptic builds on this observation by treating Uniswap v3 LP positions as primitives for perpetual options \cite{Lambert2022Panoptic,PanopticDocsPerps}. Premiums are streamed through time as functions of price proximity and time in range rather than settled as a single upfront payment.

This architecture strongly motivates looking for an observable quantity that links fee flow, range width, and a volatility like state variable. The subtlety is that fee flow measures only part of the economics of a narrow LP position.

\begin{remark}[A structural inversion relative to centralized options]
\label{rem:inversion}
In a centralized options market, implied volatility is fixed at trade inception while gamma is time varying, growing sharply as expiry approaches. Panoptic inverts this structure. A Panoptic position has no expiry, so its gamma is constant over time by construction. What varies stochastically instead is the spread multiplier, the implied quantity that scales the streamed premium. Centralized options fix the implied leg and let gamma evolve; Panoptic fixes the gamma leg and lets the implied quantity evolve. This inversion is a structural feature of the protocol design, not an approximation, and it is part of why a direct transplant of CEX style implied volatility machinery does not carry over cleanly.
\end{remark}

\subsection{Why fee flow is not the whole option cost}

A narrow LP position carries negative convexity: it is short the curvature that an option buyer is long. Holding that position responsibly requires dynamically hedging the delta as price moves, and this hedging carries a cost that is a function of realized volatility, not of the fee tier or the fee revenue collected. A pool paying a thousand times more in fees still faces the same hedging cost for a given realized price path; the fee level determines how well compensated the LP is for bearing the convexity, not how much the convexity costs to carry. For a narrow LP position, the realized economics are therefore shaped by at least two distinct forces:
\begin{enumerate}[leftmargin=2em]
  \item fees earned from swap flow while active in range, which compensate the LP for supplying convexity,
  \item the cost of dynamically hedging that convexity against realized price moves, which is paid out through arbitrage aligned trades that correct stale AMM prices.
\end{enumerate}

The second component is the AMM analogue of the cost of being short optionality. It is not best understood as arbitrageurs extracting value from LPs; it is the unavoidable cost of carrying negative convexity, realized whenever price moves and the LP's exposure must be rebalanced back toward flat. In the recent literature this cost is captured by predictable loss and closely related formulations of rebalancing loss \cite{Cartea2023PredictableLoss}. Therefore, any attempt to identify the full embedded short option premium from fees alone is incomplete unless one adds assumptions that tie fees and hedging cost together.

\section{Uniswap v3 context for a DEX native proxy}
\label{sec:uniswapv3_context}

This section summarizes the parts of the Uniswap v3 design that matter for our fee implied volatility proxy. We focus on concentrated liquidity, ticks and square root prices, the relation between fee tiers and tick spacing, and the on chain observables used in the estimator.

\subsection{Constant product market making with concentrated liquidity}

Uniswap v3 belongs to the class of constant function market makers. Earlier versions of Uniswap implement a constant product market maker with reserves $(x,y)$ and invariant
\begin{equation}
x y = k.
\end{equation}
This provides liquidity over the entire price range $(0,\infty)$ but is capital inefficient because only a fraction of pooled assets are used near the prevailing price.

Uniswap v3 introduces \emph{concentrated liquidity}. Liquidity providers choose a finite price interval $[p_a,p_b]$ and supply liquidity only inside this range \cite{Adams2021UniswapV3,UniswapDocsConcentrated}. Each such range defines a position that behaves as a constant product pool with larger virtual reserves while the spot price stays inside the interval. Outside the range the position holds a single asset and no longer earns fees. The liquidity of a position is measured by a scalar $L=\sqrt{k}$ and the real reserves $(x,y)$ obey
\begin{equation}
\label{eq:univ3_real_reserves}
\left(x+\frac{L}{\sqrt{p_b}}\right)
\left(y+L\sqrt{p_a}\right)
=
L^2,
\end{equation}
which is a translation of the constant product curve that ensures solvency exactly inside $[p_a,p_b]$.

Positions on very narrow ranges act similarly to range orders. When the price crosses the range from one side to the other the reserves flip from entirely in one asset to entirely in the other, plus fees, which motivates the option like interpretation of narrow LP positions.

\subsection{Ticks, square root price, and liquidity}

To implement arbitrary ranges, Uniswap v3 discretizes the price space through integer tick indices $i$. Each tick corresponds to a price
\begin{equation}
p(i)=1.0001^i,
\end{equation}
so that consecutive ticks are separated by roughly one basis point.

In practice the protocol works in square root price space. The square root price at tick $i$ is
\begin{equation}
\sqrt{p(i)}=1.0001^{i/2},
\end{equation}
and the pool state stores a fixed point representation $\sqrt{P}$ of the current square root price and a scalar liquidity $L$. Between two adjacent initialized ticks the pool behaves as a constant product AMM with virtual reserves $(x,y)$ related to $(L,\sqrt{P})$ by
\begin{equation}
x=\frac{L}{\sqrt{P}},
\qquad
y=L\sqrt{P}.
\end{equation}
Equivalently, liquidity can be understood as the sensitivity of the token one reserve to the square root price
\begin{equation}
L=\frac{\Delta y}{\Delta \sqrt{P}},
\end{equation}
which leads to simple swap formulas for small moves inside a tick \cite{Adams2021UniswapV3}.

The pool also tracks the current tick index $i_c$, defined as the greatest tick not exceeding the present square root price
\begin{equation}
i_c
=
\left\lfloor
\log_{\sqrt{1.0001}}\sqrt{P}
\right\rfloor.
\end{equation}
Liquidity changes only when the price crosses an initialized tick or when positions are minted or burned, never during swaps that remain inside an interval between initialized ticks.

\subsection{Tick spacing, ranges, and fee tiers}

Not every integer tick can be used as a range boundary. Each pool is created with a parameter \texttt{tickSpacing}, denoted $t_S$, and only ticks with indices that are multiples of $t_S$ can be initialized as lower or upper bounds of positions \cite{Adams2021UniswapV3}.

The initial Uniswap v3 configuration supports three fee rates
\begin{equation}
\mathrm{feeRate}\in\{0.0005,\ 0.0030,\ 0.01\}
\end{equation}
and associates them with tick spacings
\begin{equation}
t_S\in\{10,\ 60,\ 200\},
\end{equation}
respectively \cite{Adams2021UniswapV3}. At the level of effective width in price, these imply minimal percent widths of approximately $0.10\%$, $0.60\%$, and $2.02\%$ between initializable ticks.

For the standard fee tiers one has the useful approximate proportionality
\begin{equation}
\label{eq:tSfeerel}
t_S \approx 2\cdot 10^4 \,\mathrm{feeRate}.
\end{equation}
This relation is a consequence of protocol design for the standard tiers. It is not a universal law and should be treated as an approximation specific to fee tier configurations of this type.

\subsection{Price and liquidity oracles}

Uniswap v3 improves on earlier designs by tracking accumulators of both price and liquidity within the core pool contracts \cite{Adams2021UniswapV3,UniswapDocsOracles}. Each pool maintains a seconds weighted accumulator of the logarithmic price, expressed as the tick index
\begin{equation}
a_t=\sum_{i=1}^{t}\log_{1.0001}(P_i),
\end{equation}
where $P_i$ is the instantaneous price at second $i$. The time weighted geometric mean price over a period $[t_1,t_2]$ can then be recovered as
\begin{equation}
P_{t_1,t_2}
=
1.0001^{\frac{a_{t_2}-a_{t_1}}{t_2-t_1}}.
\end{equation}

In parallel, the pool tracks a seconds weighted accumulator of the inverse liquidity
\begin{equation}
\mathrm{secondsPerLiquidityCumulative}(t)
=
\int_0^t \frac{1}{L_u}\,du,
\end{equation}
updated at the beginning of each block \cite{Adams2021UniswapV3}. These oracles are useful for reconstructing time in range and liquidity weighted quantities, but they do not by themselves reveal the decomposition between informed arbitrage and uninformed order flow.

\subsection{Observable quantities}

The pool exposes or allows reconstruction of:
\begin{enumerate}[leftmargin=2em]
  \item the fixed fee tier $\mathrm{feeRate}$,
  \item active liquidity at the current tick, denoted $L_{\mathrm{tick}}$,
  \item swap flow over a chosen window, from which one can build a period volume variable.
\end{enumerate}

These are genuine on chain observables. They suffice to define a fee based activity index. They do not suffice to identify a structural latent volatility without further assumptions.

\section{Theta and Dirac type approximation}
\label{sec:theta-dirac}

\subsection{Notation}

We use the following notation throughout.
\begin{itemize}[leftmargin=2em]
  \item $S$, spot price of the underlying asset,
  \item $K$, strike price,
  \item $\sigma$, volatility parameter,
  \item $t$, time to maturity in the Black--Scholes setting.
\end{itemize}

We work under zero interest rate and no dividends for simplicity. Limits as $t\to 0$ are taken with $S$ close to $K$, consistent with the short dated option interpretation of narrow range liquidity.

\subsection{Theta of a short dated option}

For a European option under the Black--Scholes model with zero rates and no dividends, theta can be written as
\[
\theta(S,t)
=
\frac{S\sigma}{\sqrt{8\pi t}}
\exp\left(
-
\frac{\left(\ln(S/K)+\frac{\sigma^2 t}{2}\right)^2}{2\sigma^2 t}
\right),
\]
up to the sign convention distinguishing long and short theta \cite{BlackScholes1973,Merton1973}. For our purposes only the concentration and total mass matter.

As $t$ decreases while $S$ stays near $K$, the Gaussian term becomes increasingly narrow in $\ln(S/K)$. The prefactor grows like $t^{-1/2}$, and the net effect is that $\theta(S,t)$ develops a sharp spike around $S=K$ whose integral over $S$ has a finite limit.

\subsection{Dirac distribution and Gaussian approximation}

The Dirac delta distribution is defined formally through its action on test functions. A classical approximation is given by centered Gaussians \cite{Lighthill1960}:
\[
\delta(x)
=
\lim_{\varepsilon\to 0}
\frac{1}{\varepsilon\sqrt{2\pi}}
\exp\left(
-\frac{x^2}{2\varepsilon^2}
\right)
\]
in the sense of distributions. For any smooth compactly supported test function $f$,
\[
\int_{\mathbb{R}}
f(x)\,
\frac{1}{\varepsilon\sqrt{2\pi}}
\exp\left(
-\frac{x^2}{2\varepsilon^2}
\right)\,dx
\longrightarrow
f(0)
\qquad
\text{as }\varepsilon\to 0.
\]

\subsection{Theta as a scaled Dirac impulse}

Set $x=\ln(S/K)$ and $\varepsilon^2=\sigma^2 t$. For small $t$, the exponent in theta satisfies
\[
\frac{\left(\ln(S/K)+\frac{\sigma^2 t}{2}\right)^2}{2\sigma^2 t}
\approx
\frac{x^2}{2\varepsilon^2}.
\]
Using $S=Ke^x$ and $dS=Ke^x\,dx$, one obtains the following distributional limit.

\begin{proposition}[Dirac approximation of theta]
\label{prop:dirac-theta}
As $t\to 0$, theta converges in the sense of distributions to a scaled Dirac mass located at $S=K$:
\[
\theta(S,t)\rightharpoonup \frac{K^2\sigma^2}{2}\,\delta(S-K).
\]
Equivalently, for every smooth compactly supported test function $f$,
\[
\int_0^\infty f(S)\,\theta(S,t)\,dS
\longrightarrow
\frac{K^2\sigma^2}{2}\,f(K).
\]
\end{proposition}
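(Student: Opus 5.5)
The plan is to change variables to log-moneyness and recognize the integral exactly as a Gaussian expectation. Everything then reduces to the Gaussian approximation of the Dirac mass recalled just before Proposition~\ref{prop:dirac-theta}, applied in the sense of weak convergence of probability measures. Fix a smooth test function $f$ with compact support in $(0,\infty)$; support touching $0$ is also harmless, see below. Substitute $S=Ke^x$, $dS=Ke^x\,dx$, and set $\varepsilon^2=\sigma^2 t$. The key algebraic identity is
\[
\frac{\sigma}{\sqrt{8\pi t}}=\frac{\sigma^2}{2}\cdot\frac{1}{\varepsilon\sqrt{2\pi}},
\]
so that
\[
\int_0^\infty f(S)\,\theta(S,t)\,dS
=\frac{K^2\sigma^2}{2}\int_{\mathbb{R}} f(Ke^x)\,e^{2x}\,
\frac{1}{\varepsilon\sqrt{2\pi}}\exp\!\left(-\frac{(x+\varepsilon^2/2)^2}{2\varepsilon^2}\right)dx .
\]
This already exhibits the constant $K^2\sigma^2/2$. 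The remaining integral is $\mathbb{E}[g(X_\varepsilon)]$ with $g(x)=f(Ke^x)e^{2x}$ and $X_\varepsilon\sim N(-\varepsilon^2/2,\varepsilon^2)$.

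To pass to the limit, I would avoid the approximate exponent replacement used in the text and instead absorb the factor $e^{2x}$ exactly by completing the square. One has $e^{2x}\,\varphi_{-\varepsilon^2/2,\varepsilon^2}(x)=e^{\varepsilon^2}\,\varphi_{3\varepsilon^2/2,\varepsilon^2}(x)$, where $\varphi_{m,v}$ denotes the normal density with mean $m$ and variance $v$. The integral therefore equals $e^{\varepsilon^2}\,\mathbb{E}[f(Ke^{Y_\varepsilon})]$ with $Y_\varepsilon\sim N(3\varepsilon^2/2,\varepsilon^2)$. As $t\to0$ we have $e^{\varepsilon^2}\to1$ and $Y_\varepsilon\to0$ in distribution. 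Since $x\mapsto f(Ke^x)$ is bounded and continuous on $\mathbb{R}$, even if the support of $f$ reaches $0$, the expectation tends to $f(K)$. Quantitatively, by the mean value theorem it differs from $f(K)$ by at most $K\|f'\|_\infty\,\mathbb{E}|e^{Y_\varepsilon}-1|=O(\varepsilon)$.

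The main obstacle is not analytic but one of bookkeeping. One must track the Jacobian $Ke^x$ together with the $S$ prefactor, which gives $K^2e^{2x}$, and convert the $t^{-1/2}$ prefactor into the normalized Gaussian. This conversion is what produces exactly $\sigma^2/2$ rather than some other multiple. A second point is the drift $\varepsilon^2/2$ in the exponent. The heuristic ``$\approx$'' in the text discards it, but the exact completing-the-square step shows it contributes only the factor $e^{\varepsilon^2}$ and an $O(\varepsilon^2)$ mean shift, both of which vanish. I would also state explicitly that $\sigma$ and $K$ are held fixed and that the sign convention for theta is the one fixed earlier. The limit is then literally $\frac{K^2\sigma^2}{2}f(K)$, which is the claimed distributional convergence $\theta(\cdot,t)\rightharpoonup\frac{K^2\sigma^2}{2}\delta(\cdot-K)$.
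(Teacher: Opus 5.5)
Your proof is correct and follows essentially the same route as the paper's appendix: the substitution $S=Ke^x$, extraction of the constant $K^2\sigma^2/2$ from the $t^{-1/2}$ prefactor, and Gaussian localization as $\varepsilon^2=\sigma^2 t\to 0$. The only real difference is how you handle the weight $e^{2x}$. You absorb it into the kernel by completing the square, which is the same computation the paper uses for the total mass and which produces the factor $e^{\sigma^2 t}$. The paper instead keeps $f(Ke^x)e^{2x}$ as the test function and appeals loosely to dominated convergence, so your version gives a cleaner justification and an explicit $O(\varepsilon)$ rate.
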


\begin{proof}
The proof is given in Appendix \ref{app:dirac-proof}. It uses the change of variables $u=\ln(S/K)$, Gaussian localization, and dominated convergence.
\end{proof}

Proposition \ref{prop:dirac-theta} shows that theta behaves like a narrow pulse in price space with finite area
\[
\int_0^\infty \theta(S,t)\,dS
\longrightarrow
\frac{K^2\sigma^2}{2}
\qquad
\text{as } t\to 0.
\]
This total mass motivates a narrow band approximation for short maturity premium intensity.

\section{Narrow ranges and rectangularized theta}
\label{sec:rectangularized-theta}

\subsection{Minimal range width}

Consider an LP position in a Uniswap v3 pool with lower and upper prices $P_a$ and $P_b$. It is convenient to parametrize the range by its geometric mean
\[
K=\sqrt{P_aP_b}
\]
and a range factor
\[
r=\sqrt{\frac{P_b}{P_a}},
\]
so that $P_a=K/r$ and $P_b=Kr$.

For a very narrow range, $r$ is close to $1$. On chain, the minimal width is imposed by the tick spacing $t_S$. If the minimal range spans two initializable ticks separated by $t_S$, then
\[
P_b=P_a\cdot 1.0001^{t_S}.
\]
For small increments,
\[
1.0001^{t_S}\approx 1+\frac{t_S}{10^4},
\]
hence
\[
P_b-P_a \approx P_a\frac{t_S}{10^4}.
\]
If the range is narrowly centered around $K$, we may replace $P_a$ by $K$ and obtain
\begin{equation}
\label{eq:width-approx}
P_b-P_a \approx K\frac{t_S}{10^4}.
\end{equation}

\subsection{Rectangularized premium intensity}

If one spreads the total theta mass from Proposition \ref{prop:dirac-theta} uniformly across the minimal band, the resulting rectangularized height is approximately
\begin{equation}
\label{eq:rect-height}
\text{rectangularized theta height}
\approx
\frac{K^2\sigma^2/2}{K t_S/10^4}
=
\frac{K\sigma^2}{2}\frac{10^4}{t_S}.
\end{equation}

This is a heuristic bridge from short maturity Black--Scholes theta to a narrow range LP object. The key question is then what object is identified when this quantity is matched to realized pool fees.

\section{A DEX native fee implied volatility proxy}
\label{sec:iv-derivation}

\subsection{Fee income per unit of active liquidity}

Let $\mathrm{feeRate}$ denote the pool fee tier expressed as a fraction of traded volume. Over a given window, total fees collected by the pool are
\[
\mathrm{TotalFees}
=
\mathrm{feeRate}\times \mathrm{Volume},
\]
where $\mathrm{Volume}$ is the cumulative traded volume in the quote asset over the same window.

At a given tick, the pool tracks the active liquidity $L_{\mathrm{tick}}$. This motivates the fee intensity per unit of active liquidity
\begin{equation}
\label{eq:fee-intensity}
\mathrm{feeRate}\times \frac{\mathrm{Volume}}{L_{\mathrm{tick}}}.
\end{equation}

This quantity is directly observable from pool data up to the usual practical issues of windowing and volume aggregation.

\subsection{Matching convention}

We now introduce a \emph{definition}, rather than a structural identification claim. The rectangularized theta height in \eqref{eq:rect-height} is denominated in price, since it is the quotient of a price-squared mass by a price width. To compare it with the fee intensity in \eqref{eq:fee-intensity}, which is dimensionless, we scale both sides by the characteristic price level $K$ so that they are expressed as price-denominated intensities on a common footing. This gives the matching condition
\[
\frac{K\sigma_{\mathrm{fee}}^2}{2}\frac{10^4}{t_S}
=
K\,\mathrm{feeRate}\frac{\mathrm{Volume}}{L_{\mathrm{tick}}}.
\]
Using the standard tier approximation \eqref{eq:tSfeerel}, this yields
\[
\frac{K\sigma_{\mathrm{fee}}^2}{2}\frac{10^4}{2\cdot 10^4\,\mathrm{feeRate}}
=
K\,\mathrm{feeRate}\frac{\mathrm{Volume}}{L_{\mathrm{tick}}},
\]
hence
\[
\sigma_{\mathrm{fee}}^2
=
4\,\mathrm{feeRate}^2\frac{\mathrm{Volume}}{L_{\mathrm{tick}}}.
\]

\begin{definition}[Fee implied volatility proxy]
\label{def:feeiv}
The \emph{DEX native fee implied volatility proxy} is defined by
\begin{equation}
\label{eq:sigma-fee}
\sigma_{\mathrm{fee}}
=
2\,\mathrm{feeRate}\sqrt{\frac{\mathrm{Volume}}{L_{\mathrm{tick}}}}.
\end{equation}
\end{definition}

\subsection{Interpretation}

Definition \ref{def:feeiv} is mathematically clean and operationally useful, but its meaning must be stated precisely.

\begin{remark}
$\sigma_{\mathrm{fee}}$ is not identified as a structural Black--Scholes implied volatility from first principles. It is the volatility like parameter that makes a rectangularized short maturity theta approximation match realized fee intensity.
\end{remark}

This distinction matters because the matching convention sees only the compensation leg of narrow range LP economics, namely the fees, and not the full short optionality cost.

\section{Why structural volatility is not identified}
\label{sec:not-identified}

\subsection{A decomposition argument}

Let $\Pi_{\mathrm{short}}$ denote the full economic cost of the embedded short option over a window. This is the object naturally suggested by the theta based derivation. Let $\mathrm{Fees}$ denote the realized fee income over the same window. In general,
\begin{equation}
\label{eq:basic-decomp}
\Pi_{\mathrm{short}} \neq \mathrm{Fees}.
\end{equation}
A more faithful decomposition is
\begin{equation}
\label{eq:decomp}
\Pi_{\mathrm{short}}
=
\mathrm{Fees}
+
\mathrm{HedgingCost},
\end{equation}
where $\mathrm{HedgingCost}$ collects the cost of dynamically hedging the LP's negative convexity, realized through arbitrage aligned trades that correct stale AMM prices, and closely related to predictable loss and rebalancing loss in the recent literature \cite{Cartea2023PredictableLoss}.

The two terms in \eqref{eq:decomp} scale differently with a price correcting move. For a single such move, the fee income collected on the crossed liquidity scales linearly in the size of the move, since fees are charged proportionally to the volume that crosses the range. The hedging cost borne by the LP, by contrast, scales quadratically in the size of the move: it is the AMM analogue of a convexity cost, growing with the square of the displacement rather than the displacement itself. This asymmetry is the reason $\mathrm{Fees}$ and $\mathrm{HedgingCost}$ do not scale together as volume grows, and it is the structural source of the non-identification result below.

This quadratic scaling has an established closed form. Writing $\mathrm{HedgingCost}$ in continuous time as loss versus rebalancing, its instantaneous rate is
\begin{equation}
\label{eq:lvr-rate}
\overline{\mathrm{LVR}}
=
\frac{\sigma^2}{8},
\end{equation}
where $\sigma^2$ is the instantaneous variance of the underlying price \cite{PanopticSolvesLVR}. Equation \eqref{eq:lvr-rate} makes the quadratic scaling explicit: the LP side of the decomposition depends on the underlying's variance, not on the volume that generates fee income, and there is no reason for the two to move together as volume changes. It also gives a name to the term that fee matching leaves out. The fee based derivation in Section \ref{sec:iv-derivation} identifies only the compensation leg, while $\overline{\mathrm{LVR}}$ is precisely the cost leg it omits.

Equation \eqref{eq:decomp} explains why the original identification is too strong. The theta argument targets the full short option cost, but the fee matching step only uses the fee component, and that component does not carry a fixed proportional relationship to the hedging cost it leaves out.

\subsection{A non identification proposition}

Suppose there exists a fee capture ratio $\alpha\in(0,1]$ such that
\begin{equation}
\label{eq:alpha}
\mathrm{Fees}=\alpha\,\Pi_{\mathrm{short}}.
\end{equation}
Let $\sigma_\ast$ denote the structural volatility parameter that would enter the full short option cost. Then the fee implied proxy and the structural volatility satisfy
\begin{equation}
\label{eq:alpha-sigma}
\sigma_{\mathrm{fee}}^2
=
\alpha\,\sigma_\ast^2,
\qquad
\sigma_\ast^2
=
\frac{\sigma_{\mathrm{fee}}^2}{\alpha}.
\end{equation}

\begin{proposition}[Non identification from fees alone]
\label{prop:nonident}
If $\alpha$ is not observed and cannot be deduced from pool observables alone, then the structural volatility $\sigma_\ast$ is not identified from $\mathrm{feeRate}$, $\mathrm{Volume}$, and $L_{\mathrm{tick}}$ alone.
\end{proposition}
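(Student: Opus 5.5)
The plan is the standard non-identification argument. I will exhibit two admissible data generating configurations that produce identical values of $(\mathrm{feeRate},\mathrm{Volume},L_{\mathrm{tick}})$ but different structural volatilities $\sigma_\ast$.

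First I will record that $\sigma_{\mathrm{fee}}$ is a deterministic function of the observables. By Definition~\ref{def:feeiv}, $\sigma_{\mathrm{fee}}=2\,\mathrm{feeRate}\sqrt{\mathrm{Volume}/L_{\mathrm{tick}}}$. Hence any two environments sharing the triple $(\mathrm{feeRate},\mathrm{Volume},L_{\mathrm{tick}})$ share the same $\sigma_{\mathrm{fee}}$. Identification of $\sigma_\ast$ from these observables would therefore require $\sigma_\ast$ to be a function of that triple alone.

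Second, I will fix an observable triple and hence a value $\sigma_{\mathrm{fee}}>0$, and pick two capture ratios $\alpha_1\neq\alpha_2$ in $(0,1]$. By \eqref{eq:alpha-sigma} the corresponding structural volatilities are $\sigma_{\ast,j}^2=\sigma_{\mathrm{fee}}^2/\alpha_j$, so $\sigma_{\ast,1}\neq\sigma_{\ast,2}$ because the map $\alpha\mapsto\sigma_{\mathrm{fee}}^2/\alpha$ is injective. The two environments agree on every quantity the estimator sees and disagree on $\sigma_\ast$. No function of $(\mathrm{feeRate},\mathrm{Volume},L_{\mathrm{tick}})$ can therefore equal $\sigma_\ast$ in both, which gives non-identification.

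The main obstacle is the admissibility of the construction. The hypothesis that $\alpha$ ``cannot be deduced from pool observables'' has to be turned into a concrete statement: both values $\alpha_1,\alpha_2$ must be compatible with the same observed triple. My first approach is to justify this with the decomposition \eqref{eq:decomp} together with the LVR rate \eqref{eq:lvr-rate}. Since $\mathrm{HedgingCost}$ scales with the external variance and arbitrage timing, rather than with $\mathrm{Volume}$ or $L_{\mathrm{tick}}$, I can hold $\mathrm{Fees}=\mathrm{feeRate}\cdot\mathrm{Volume}$ fixed and vary the external price dynamics. This changes $\mathrm{HedgingCost}$, and hence $\Pi_{\mathrm{short}}$ and $\alpha=\mathrm{Fees}/\Pi_{\mathrm{short}}$, while leaving the triple unchanged.

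A second way to make the same point is the informed/uninformed split of volume. The same aggregate $\mathrm{Volume}$ can be composed of different shares of arbitrage and noise flow, which correspond to different $\alpha$. If a fully rigorous version is wanted, I would simply state this compatibility as the precise formal content of the proposition's hypothesis.
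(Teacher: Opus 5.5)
Your proposal is correct and follows the paper's proof: $\sigma_{\mathrm{fee}}$ is a deterministic function of $(\mathrm{feeRate},\mathrm{Volume},L_{\mathrm{tick}})$, and by \eqref{eq:alpha-sigma} distinct unobserved $\alpha\in(0,1]$ give distinct $\sigma_\ast^2=\sigma_{\mathrm{fee}}^2/\alpha$, so the map from observables to $\sigma_\ast$ is one-to-many. The paper does not attempt your admissibility step (arguing that two values of $\alpha$ are compatible with the same observed triple); it simply takes this as the meaning of the hypothesis, which is exactly your fallback reading.
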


\begin{proof}
Equation \eqref{eq:sigma-fee} identifies only $\sigma_{\mathrm{fee}}^2$. By \eqref{eq:alpha-sigma}, every value of $\alpha\in(0,1]$ yields a different structural volatility
\[
\sigma_\ast^2=\frac{\sigma_{\mathrm{fee}}^2}{\alpha}
\]
that is consistent with the same observed fee proxy. Hence the mapping from pool observables to $\sigma_\ast$ is one to many unless $\alpha$ is known or externally pinned down.
\end{proof}

\subsection{Why the missing ratio is not pool observable}

The missing ratio $\alpha$ is not a simple state variable of the pool. It depends on microstructure features that cannot be recovered from aggregate fee accounting alone:
\begin{enumerate}[leftmargin=2em]
  \item the external efficient price process,
  \item the timing and speed of arbitrage corrections,
  \item the mix of informed arbitrage flow and uninformed retail flow,
  \item the sampling scale at which price corrections are observed.
\end{enumerate}

Therefore a clean CEX style implied volatility cannot, in general, be backed out from aggregate pool fees without further assumptions or external information.

\subsection{Empirical illustration on the ETH/USDC 30bps pool}
\label{subsec:empirical-illustration}

The non-identification result in Proposition \ref{prop:nonident} is a statement about what pool observables alone can deliver, but it is worth checking against an external reference series. In this section we write $\mathrm{IV}$ as a compact label for $\sigma_{\mathrm{fee}}$ and $\mathrm{RV}$ for realized volatility, matching the notation used in the underlying plots; this is a labeling convenience only and does not imply the structural identification that Section \ref{sec:iv-derivation} and Section \ref{sec:not-identified} explicitly caution against. Figure \ref{fig:sigma-fee-vs-benchmarks} plots $\mathrm{IV}$, computed from the ETH/USDC 30bps pool using the windowed estimator of Section \ref{sec:implementation}, alongside $\mathrm{RV}$ of ETH spot over the same window. Realized volatility is a natural first benchmark here, and not merely a convenient one: by \eqref{eq:lvr-rate}, the hedging cost term that fee matching omits is itself driven by the instantaneous variance of the underlying, so realized volatility is the quantity most directly implicated by the missing leg of the decomposition in \eqref{eq:decomp}.

\begin{figure}[H]
\centering
\includegraphics[width=0.85\textwidth]{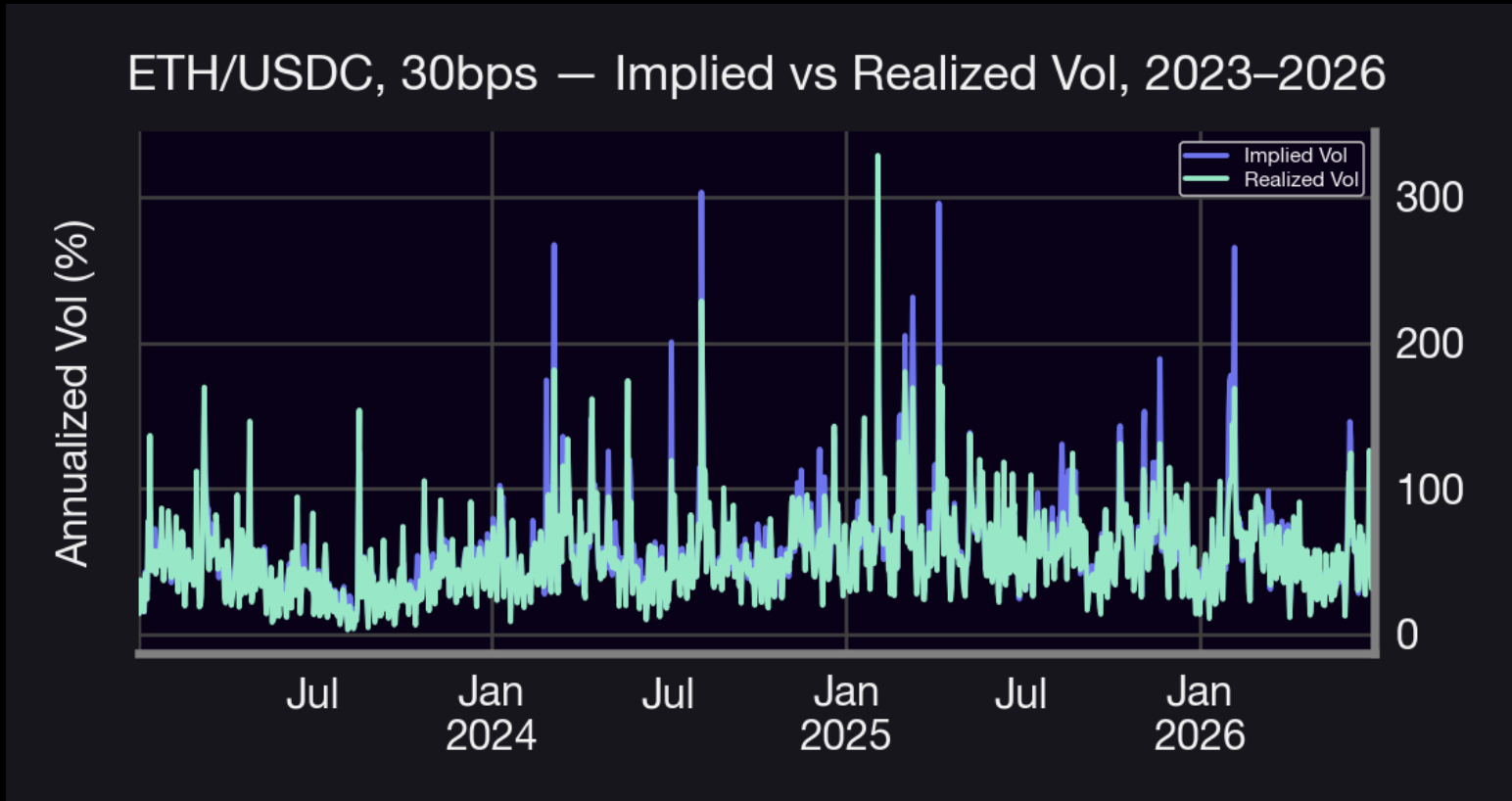}
\caption{$\mathrm{IV}$ ($=\sigma_{\mathrm{fee}}$) for the ETH/USDC 30bps pool plotted against $\mathrm{RV}$ of ETH spot over the same window, daily observations from January 2023 to July 2026.}
\label{fig:sigma-fee-vs-benchmarks}
\end{figure}

Two features of the plot are relevant to the identification question raised above. First, $\mathrm{IV}$ tracks realized volatility closely: over the full window the two series have a correlation of $0.87$, moving through the same compression and expansion regimes despite being constructed from entirely disjoint sets of observables. This is consistent with the activity index interpretation of Section \ref{sec:retail}: the proxy responds to genuine changes in market conditions even though it is not itself a structural volatility.

Second, $\mathrm{IV}$ trades at a premium to realized volatility more often than not. The mean spread is $3.5$ percentage points, with $\mathrm{IV}$ exceeding $\mathrm{RV}$ on $68\%$ of days. To quantify the premium, we compute an empirical, RV referenced ratio
\[
\widehat{\alpha}_{\mathrm{RV}}(t)
=
\left(\frac{\mathrm{IV}(t)}{\mathrm{RV}(t)}\right)^2.
\]
We use the subscript to distinguish this quantity from the theoretical capture ratio $\alpha\in(0,1]$ of Section \ref{sec:not-identified}, which is defined relative to the true structural volatility $\sigma_\ast$ entering $\Pi_{\mathrm{short}}$ and is bounded above by $1$ as an accounting identity given $\mathrm{HedgingCost}\ge 0$. $\widehat{\alpha}_{\mathrm{RV}}(t)$ is defined relative to realized volatility instead, and there is no reason it need respect the same bound if realized volatility is itself a biased estimate of $\sigma_\ast$. Over the sample, $\widehat{\alpha}_{\mathrm{RV}}(t)$ has mean $1.33$, median $1.18$, and interquartile range $[0.92,\,1.53]$, exceeding $1$ on $68\%$ of days. Figure \ref{fig:alpha-implied} plots $\widehat{\alpha}_{\mathrm{RV}}(t)$ directly, and Table \ref{tab:iv-rv-stats} reports the full set of summary statistics.

\begin{table}[H]
\centering
\begin{tabular}{lr}
\toprule
Metric & Value \\
\midrule
N (daily obs.) & 1276 \\
Mean IV (\%) & 57.06 \\
Std IV (\%) & 30.12 \\
Mean RV (\%) & 53.53 \\
Std RV (\%) & 28.83 \\
Corr(IV, RV) & 0.87 \\
Mean spread, IV $-$ RV (pp) & 3.53 \\
Std spread (pp) & 15.03 \\
Share of days IV $>$ RV (\%) & 68.26 \\
Mean $\widehat{\alpha}_{\mathrm{RV}}$ & 1.33 \\
Median $\widehat{\alpha}_{\mathrm{RV}}$ & 1.18 \\
Std $\widehat{\alpha}_{\mathrm{RV}}$ & 0.74 \\
IQR $\widehat{\alpha}_{\mathrm{RV}}$ & $[0.92,\,1.53]$ \\
Min $\widehat{\alpha}_{\mathrm{RV}}$ & 0.26 \\
Max $\widehat{\alpha}_{\mathrm{RV}}$ & 11.33 \\
Share of days $\widehat{\alpha}_{\mathrm{RV}} > 1$ (\%) & 68.26 \\
\bottomrule
\end{tabular}
\caption{Summary statistics for $\mathrm{IV}$ ($=\sigma_{\mathrm{fee}}$) against realized volatility, ETH/USDC 30bps pool, 2023 to 2026.}
\label{tab:iv-rv-stats}
\end{table}

The time series in Figure \ref{fig:alpha-implied} shows that most of the mass of $\widehat{\alpha}_{\mathrm{RV}}(t)$ sits just above the $\widehat{\alpha}_{\mathrm{RV}}=1$ line, consistent with the median reported above, punctuated by sharp, short lived spikes that reach as high as $11.3$. These spikes are not scattered randomly through the sample; they cluster around the same volatility events visible in Figure \ref{fig:sigma-fee-vs-benchmarks}, which is itself informative: the ratio is least stable exactly when the underlying market is moving the most, precisely the regime in which a capture ratio would need to be reliable if it were to be used for anything. Part of this asymmetry is mechanical rather than purely economic. Because $\widehat{\alpha}_{\mathrm{RV}}$ is a squared ratio, downside deviations are bounded in $[0,1)$ when $\mathrm{IV}$ falls short of $\mathrm{RV}$, while upside deviations are unbounded when $\mathrm{IV}$ leads a realized volatility measure that reacts with a lag during sharp moves. A fee flow driven quantity such as $\mathrm{IV}$ can respond within the same window that generates it, whereas $\mathrm{RV}$ is constructed from a rolling window of past returns and updates more slowly. This lag asymmetry, combined with the squaring, is sufficient on its own to produce a right skewed, spike dominated distribution of exactly the shape observed here, independent of any deeper claim about volatility risk premia.

The median and interquartile range confirm this is a typical day pattern, not an artifact of a handful of extreme spikes; the 1st to 99th percentile trimmed mean of $1.30$ is nearly identical to the untrimmed mean of $1.33$. A natural reading is that $\mathrm{IV}$ behaves in a manner consistent with the volatility risk premium well documented in centralized options markets, where implied volatility trades above subsequently realized volatility on average because option sellers price in compensation for bearing the risk rather than merely recovering its backward looking cost. Under that reading, the majority of days with $\widehat{\alpha}_{\mathrm{RV}}(t)>1$ is not evidence against the decomposition in \eqref{eq:decomp}; it is evidence that realized volatility, being backward looking by construction, is a biased proxy for the forward looking $\sigma_\ast$ that decomposition actually references. This sharpens rather than weakens the identification gap in Section \ref{sec:not-identified}: not only does $\widehat{\alpha}_{\mathrm{RV}}(t)$ fail to sit at a fixed value, its relationship to the theoretical bound on $\alpha$ depends entirely on which benchmark is used to stand in for $\sigma_\ast$, which is itself an unresolved measurement choice.

\begin{figure}[H]
\centering
\includegraphics[width=1\textwidth]{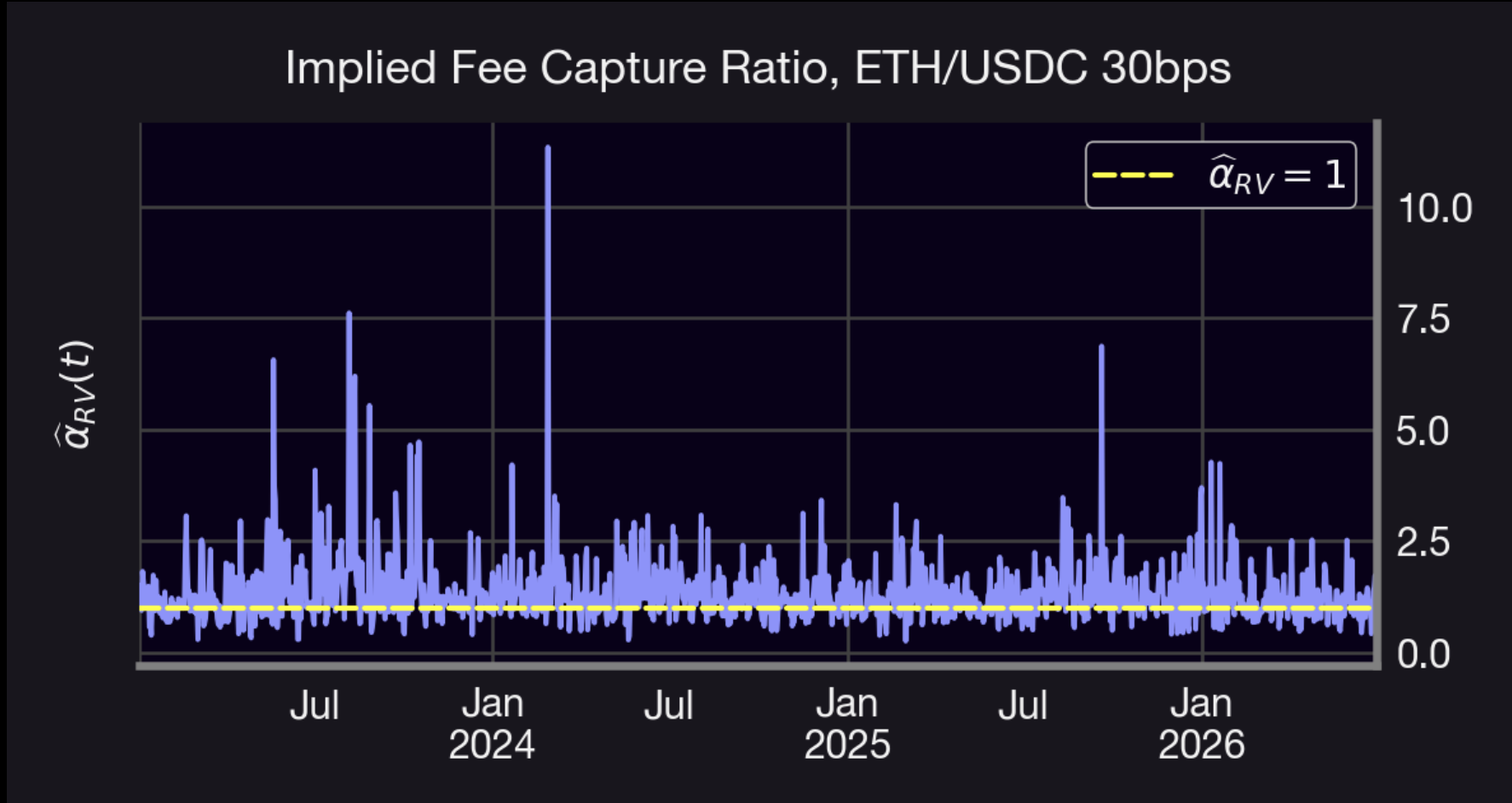}
\caption{$\widehat{\alpha}_{\mathrm{RV}}(t) = (\mathrm{IV}(t)/\mathrm{RV}(t))^2$ over the same window as Figure \ref{fig:sigma-fee-vs-benchmarks}.}
\label{fig:alpha-implied}
\end{figure}

Two caveats limit how much weight this comparison can bear. Realized volatility is backward looking by construction and, as just discussed, plausibly biased relative to $\sigma_\ast$ in a specific and well understood direction, so this comparison should not be read as a direct test of the bound $\alpha\in(0,1]$ from Section \ref{sec:not-identified}. In addition, this is a single pool over a single window, and the statistics above should be read as illustrative rather than as a general characterization of fee capture across market regimes. A systematic study across pools, fee tiers, and volatility regimes, and a comparison against a genuine forward looking benchmark such as an options implied volatility index, is left for future work.

\section{Retail flow contamination and activity interpretation}
\label{sec:retail}

\subsection{Why volume can move without volatility}

Formula \eqref{eq:sigma-fee} scales like $\sqrt{\mathrm{Volume}/L_{\mathrm{tick}}}$ and is also linear in the fee tier. This creates an immediate interpretation issue if one insists on reading it as structural volatility:
\begin{enumerate}[leftmargin=2em]
  \item increasing the fee tier raises $\sigma_{\mathrm{fee}}$ even if the underlying price process is unchanged,
  \item retail or inventory motivated flow can raise volume even when the efficient price is flat,
  \item sluggish arbitrage can lower measured fee intensity even when the external market is highly volatile.
\end{enumerate}

These observations do not invalidate the proxy. They show that its natural interpretation is \emph{fee normalized trading intensity}, not latent diffusion volatility. This distinction is not unique to DEXs; volume and volatility are related but empirically distinct quantities in traditional order driven markets as well \cite{Naes2006VolumeVol}, and the same caution applies here with added force, since Uniswap volume additionally mixes arbitrage flow with uninformed activity in a way that order book volume does not as directly.

\subsection{A more robust interpretation}

The quantity $\sigma_{\mathrm{fee}}$ is best viewed as a DEX native statistic answering the following question:
\begin{quote}
How intense is the fee generating trading flow relative to the liquidity currently absorbing it.
\end{quote}

This is directly relevant for Panoptic because premium accrual is tied to LP fee generation. It is not the same as asking for the volatility parameter of an external reference price process.

\section{Implementation with on chain data}
\label{sec:implementation}

Even the fee implied proxy requires care in implementation.

\subsection{Choice of observation window}

A window $[t_0,t_1]$ must be fixed, and both volume and liquidity should be aggregated consistently over that same window. A practical implementation is
\[
\mathrm{Volume}_{[t_0,t_1]}
=
\sum_{\tau\in[t_0,t_1]}\mathrm{swapVolume}_\tau,
\qquad
\bar{L}_{\mathrm{tick},[t_0,t_1]}
=
\frac{1}{t_1-t_0}
\int_{t_0}^{t_1}L_{\mathrm{tick}}(u)\,du.
\]
Then one may define a window specific proxy by
\[
\sigma_{\mathrm{fee},[t_0,t_1]}
=
2\,\mathrm{feeRate}
\sqrt{
\frac{\mathrm{Volume}_{[t_0,t_1]}}{\bar{L}_{\mathrm{tick},[t_0,t_1]}}
}.
\]

\subsection{Tick level attribution}

Uniswap v3 does not directly store cumulative volume per tick. Pool wide fee growth is observable, but attributing swap flow to a particular tick or narrow range requires reconstructing pool state over time \cite{Adams2021UniswapV3,UniswapDocsOracles}. For many empirical uses, a pool level proxy based on active liquidity is feasible. A strict per range accounting is considerably more demanding.

\subsection{Interpretability}

Because $\sigma_{\mathrm{fee}}$ mixes informed and uninformed order flow, comparisons across pools are meaningful only when the pools have similar fee tier, user base, arbitrage efficiency, and trading microstructure. This is an empirical caution rather than a mathematical defect.

\section{Scope and Limits of the Proxy}
\label{sec:panoptic-implications}

\subsection{Applications consistent with the activity index interpretation}

As an oracle free state variable, $\sigma_{\mathrm{fee}}$ is well suited to applications that require only that it summarize fee intensity accurately, not that it equal a structural diffusion volatility. Consistent examples include:
\begin{enumerate}[leftmargin=2em]
  \item scaling premium accrual schedules,
  \item comparing relative fee intensity across pools,
  \item informing collateral or margin buffers,
  \item detecting unusual activity bursts relative to liquidity.
\end{enumerate}

Each of these uses only the ranking and dynamics of $\sigma_{\mathrm{fee}}$, not its numerical equivalence to a structural volatility, which is the property the proxy actually has.

\subsection{Where the CEX style interpretation breaks down}

Reading \eqref{eq:sigma-fee} as a direct measure of the same implied volatility quoted on centralized options venues is not supported by the analysis above. Without correcting for the unobserved hedging cost component in \eqref{eq:decomp} and without filtering uninformed flow from the volume term, that stronger interpretation does not follow from the pool observables alone. Section \ref{subsec:empirical-illustration} illustrates this concretely: even against a single, imperfect external benchmark, the implied capture ratio is neither fixed nor confined to the range the underlying accounting identity would require of the true structural ratio.

\subsection{A productive research direction}

Section \ref{subsec:empirical-illustration} took a first step in this direction by comparing $\sigma_{\mathrm{fee}}$ against realized volatility as a first external benchmark and showing that the implied capture ratio moves over time rather than sitting at a fixed value. A natural extension is to repeat this comparison against a genuine forward looking benchmark, such as an options implied volatility index, and against a real microstructure filter rather than a single external proxy for $\sigma_\ast$. For example:
\begin{enumerate}[leftmargin=2em]
  \item estimate the share of flow associated with price correcting arbitrage,
  \item estimate a fee capture ratio $\alpha$,
  \item back out a corrected structural volatility proxy via
  \[
  \widehat{\sigma}_\ast
  =
  \frac{\sigma_{\mathrm{fee}}}{\sqrt{\widehat{\alpha}}}.
  \]
\end{enumerate}
This would remain model dependent, but it would make the identification step explicit rather than implicit.

\section{Conclusion}

We have reformulated the idea of a DEX native implied volatility for Uniswap v3 and Panoptic in a mathematically tighter way. Short maturity Black--Scholes theta does concentrate at the strike, and narrow Uniswap v3 ranges do provide a natural object against which that concentrated premium can be matched. The resulting formula
\[
\sigma_{\mathrm{fee}}
=
2\,\mathrm{feeRate}\sqrt{\frac{\mathrm{Volume}}{L_{\mathrm{tick}}}}
\]
is mathematically clean, operationally useful, and fully based on on chain observables.

What required correction was the interpretation. Matching theta to fees identifies a fee implied activity proxy, not a structural latent volatility. The gap comes from the fact that fee income is only one leg of LP economics. The missing leg is the cost of dynamically hedging the LP's negative convexity, closely related to predictable loss, and its magnitude is not identified from aggregate pool observables alone. In addition, volume contains uninformed flow, so the proxy naturally tracks fee generating activity rather than the volatility of an external reference price process.

This narrower framing is still powerful. It yields an oracle free state variable directly tailored to Panoptic's premium mechanics, while remaining mathematically honest about what is and is not identified. Future work should focus on combining this fee based observable with microstructure filters or event level decompositions, with the goal of separating arbitrage flow from retail flow and estimating the missing fee capture ratio.

\newpage
\section{Appendix}

\section*{Area under the theta function}
\label{app:theta-area}

This appendix computes the integral of theta over spot price in the short maturity limit and proves that it converges to $\frac{K^2\sigma^2}{2}$.

\begin{proposition}
For theta given by
\[
\theta(S,t)
=
\frac{S\sigma}{\sqrt{8\pi t}}
\exp\left(
-
\frac{\left(\ln(S/K)+\frac{\sigma^2 t}{2}\right)^2}{2\sigma^2 t}
\right),
\]
one has
\[
\int_0^\infty \theta(S,t)\,dS
=
\frac{K^2\sigma^2}{2}\exp\!\left(\sigma^2 t\right)
\longrightarrow
\frac{K^2\sigma^2}{2}
\qquad
\text{as } t\to 0.
\]
\end{proposition}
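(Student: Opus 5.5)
The plan is to compute the integral exactly with the logarithmic change of variables already used for Proposition \ref{prop:dirac-theta}, and then take the limit $t\to 0$ in the closed form. Set $u=\ln(S/K)$, so that $S=Ke^u$ and $dS=Ke^u\,du$, and the domain $(0,\infty)$ maps to $\mathbb{R}$. The prefactor $S$ in $\theta$ contributes one factor $Ke^u$ and the Jacobian contributes another. The integral therefore becomes
\[
\int_0^\infty \theta(S,t)\,dS
=
\frac{K^2\sigma}{\sqrt{8\pi t}}
\int_{\mathbb{R}} e^{2u}
\exp\!\left(-\frac{(u+m)^2}{2v}\right)du,
\qquad m=\frac{\sigma^2 t}{2},\quad v=\sigma^2 t .
\]
Since the integrand is nonnegative, Tonelli justifies the substitution with no integrability concerns.

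The key step is to recognize the remaining integral as an unnormalized Gaussian moment generating function. Write $\exp(-(u+m)^2/(2v))=\sqrt{2\pi v}\,\varphi_{-m,v}(u)$, where $\varphi_{-m,v}$ is the $N(-m,v)$ density. Then use $\mathbb{E}[e^{2U}]=\exp(2\mu+2v)$ for $U\sim N(\mu,v)$, which follows by completing the square. With $\mu=-m$ this gives
\[
\int_{\mathbb{R}} e^{2u}\exp\!\left(-\frac{(u+m)^2}{2v}\right)du
=
\sigma\sqrt{2\pi t}\,\exp\!\left(-\sigma^2 t+2\sigma^2 t\right)
=
\sigma\sqrt{2\pi t}\,e^{\sigma^2 t}.
\]

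Multiplying by the prefactor, the $\sqrt{t}$ factors cancel. This leaves $K^2\sigma^2\sqrt{2\pi/(8\pi)}\,e^{\sigma^2 t}=\tfrac{K^2\sigma^2}{2}e^{\sigma^2 t}$, which is the claimed identity. The limit as $t\to0$ then follows immediately from continuity of the exponential.

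The only step that needs care is the bookkeeping in completing the square. In particular, the drift shift $+\sigma^2 t/2$ in the exponent must be combined correctly with the $e^{2u}$ weight coming from $S\,dS$. I would double-check it by expanding $2u-(u+m)^2/(2v)$ directly and confirming that the constant term is $\sigma^2 t$.
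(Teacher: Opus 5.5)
Your proposal is correct and follows essentially the same route as the paper: the substitution $u=\ln(S/K)$, which produces the $e^{2u}$ weight, followed by completing the square to obtain $\sigma\sqrt{2\pi t}\,e^{\sigma^2 t}$ and the cancellation of $\sqrt{t}$. The only difference is cosmetic. You package the completing-the-square step as the Gaussian moment generating function $\mathbb{E}[e^{2U}]=e^{2\mu+2v}$, whereas the paper substitutes $v=u+a/2$ with $a=\sigma^2 t$ and completes the square explicitly.
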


\begin{proof}
Set $u=\ln(S/K)$, so that $S=Ke^u$ and $dS=Ke^u\,du$. Write $a=\sigma^2 t$. Then
\begin{align*}
\int_0^\infty \theta(S,t)\,dS
&=
\int_{-\infty}^{\infty}
\frac{Ke^u\sigma}{\sqrt{8\pi t}}
\exp\left(
-
\frac{\left(u+\frac{a}{2}\right)^2}{2a}
\right)
Ke^u\,du \\
&=
\frac{K^2\sigma}{\sqrt{8\pi t}}
\int_{-\infty}^{\infty}
\exp\left(
-
\frac{\left(u+\frac{a}{2}\right)^2}{2a}
+
2u
\right)\,du.
\end{align*}
Let $v=u+a/2$, so that $2u=2v-a$. The exponent becomes
\[
-\frac{v^2}{2a}+2v-a
=
-\frac{(v-2a)^2}{2a}+a,
\]
so that
\[
\int_{-\infty}^{\infty}
\exp\left(
-
\frac{\left(u+\frac{a}{2}\right)^2}{2a}
+
2u
\right)\,du
=
\exp(a)\int_{-\infty}^{\infty}
\exp\left(-\frac{(v-2a)^2}{2a}\right)\,dv
=
\exp(a)\sqrt{2\pi a}.
\]
Substituting back,
\[
\int_0^\infty \theta(S,t)\,dS
=
\frac{K^2\sigma}{\sqrt{8\pi t}}\cdot\exp(a)\sqrt{2\pi a}
=
K^2\sigma\cdot\sqrt{\frac{2\pi a}{8\pi t}}\cdot\exp(a)
=
K^2\sigma\cdot\frac{\sigma}{2}\cdot\exp(a),
\]
using $a=\sigma^2 t$ in the last step. Hence
\[
\int_0^\infty \theta(S,t)\,dS
=
\frac{K^2\sigma^2}{2}\exp(\sigma^2 t),
\]
which converges to $\frac{K^2\sigma^2}{2}$ as $t\to 0$.
\end{proof}

\section*{Dirac approximation proof}
\label{app:dirac-proof}

Let $f$ be a smooth compactly supported test function. Using the same substitution $u=\ln(S/K)$, one obtains
\begin{align*}
\int_0^\infty f(S)\,\theta(S,t)\,dS
&=
\frac{K^2\sigma}{\sqrt{8\pi t}}
\int_{-\infty}^{\infty}
f(Ke^u)
\exp\left(
-
\frac{\left(u+\frac{\sigma^2 t}{2}\right)^2}{2\sigma^2 t}
+
2u
\right)\,du.
\end{align*}
Define $\varepsilon^2=\sigma^2 t$. The Gaussian factor localizes the integral near $u=0$ as $\varepsilon\to 0$. Since $f(Ke^u)e^{2u}\to f(K)$ uniformly on shrinking neighborhoods of zero, and the Gaussian kernel is integrable and normalized up to a constant factor, dominated convergence gives
\[
\int_0^\infty f(S)\,\theta(S,t)\,dS
\longrightarrow
\frac{K^2\sigma^2}{2}\,f(K).
\]
Hence
\[
\theta(S,t)\rightharpoonup \frac{K^2\sigma^2}{2}\,\delta(S-K),
\]
which proves Proposition \ref{prop:dirac-theta}.
\qedhere

\section*{Interpretive summary}

The mathematics supports the following clean separation:
\begin{enumerate}[leftmargin=2em]
  \item theta concentration motivates a volatility like normalization for narrow LP ranges,
  \item fee matching defines a fee implied observable $\sigma_{\mathrm{fee}}$,
  \item identifying a structural latent volatility additionally requires the unobserved fee capture ratio.
\end{enumerate}

The first two steps are constructive and fully on chain. The third is a nontrivial identification problem.

\section*{Acknowledgments}

I thank the Panoptic team for numerous discussions on premium mechanics, streaming theta, and range level fee accounting that shaped the framing of this paper. I am especially grateful to Guillaume Lambert, inventor of the Panoptic protocol, for his guidance on the economics of narrow range liquidity positions and for feedback on earlier drafts of this work.

\bibliographystyle{plain}
\bibliography{refs}

@techreport{Adams2021UniswapV3,
  author       = {Hayden Adams and Noah Zinsmeister and Moody Salem and River Keefer and Dan Robinson},
  title        = {Uniswap v3 Core},
  institution  = {Uniswap Labs},
  year         = {2021},
  note         = {Technical report},
  url          = {https://uniswap.org/whitepaper-v3.pdf}
}

@article{Lambert2022Panoptic,
  author       = {Guillaume Lambert and Jesper Kristensen},
  title        = {Panoptic: the perpetual, oracle free options protocol},
  journal      = {arXiv preprint},
  eprint       = {2204.14232},
  archivePrefix= {arXiv},
  primaryClass = {q-fin.PR},
  year         = {2022},
  note         = {Version 1.3.1, June 2023}
}

@inproceedings{Hashemseresht2022Concentrated,
  author       = {Saleh Hashemseresht and Mohsen Pourpouneh},
  title        = {Concentrated Liquidity Analysis in Uniswap v3},
  booktitle    = {DeFi'22: Proceedings of the 2022 ACM CCS Workshop on Decentralized Finance and Security},
  pages        = {63--70},
  year         = {2022},
  publisher    = {ACM},
  doi          = {10.1145/3560832.3563438}
}

@inproceedings{Fan2023StrategicLP,
  author       = {Zhou Fan and Francisco Marmolejo-Coss{\'i}o and Daniel J. Moroz and Michael Neuder and Rithvik Rao and David C. Parkes},
  title        = {Strategic Liquidity Provision in Uniswap v3},
  booktitle    = {5th Conference on Advances in Financial Technologies (AFT 2023)},
  series       = {LIPIcs},
  volume       = {282},
  pages        = {25:1--25:22},
  year         = {2023},
  publisher    = {Schloss Dagstuhl},
  doi          = {10.4230/LIPIcs.AFT.2023.25},
  eprint       = {2106.12033},
  archivePrefix= {arXiv}
}

@misc{UniswapDocsConcentrated,
  author       = {{Uniswap Labs}},
  title        = {Concentrated Liquidity},
  howpublished = {\url{https://docs.uniswap.org/concepts/protocol/concentrated-liquidity}},
  note         = {Accessed 2025}
}

@misc{UniswapDocsOracles,
  author       = {{Uniswap Labs}},
  title        = {TWAP Oracles},
  howpublished = {\url{https://docs.uniswap.org/concepts/protocol/oracle}},
  note         = {Accessed 2025}
}

@misc{PanopticDocsPerps,
  author       = {{Panoptic Labs}},
  title        = {Perpetual Options on Panoptic},
  howpublished = {\url{https://panoptic.xyz/docs/trading/perpetual-options}},
  note         = {Accessed 2025}
}

@article{BlackScholes1973,
  author       = {Fischer Black and Myron Scholes},
  title        = {The Pricing of Options and Corporate Liabilities},
  journal      = {Journal of Political Economy},
  volume       = {81},
  number       = {3},
  pages        = {637--654},
  year         = {1973}
}

@article{Merton1973,
  author       = {Robert C. Merton},
  title        = {Theory of Rational Option Pricing},
  journal      = {The Bell Journal of Economics and Management Science},
  volume       = {4},
  number       = {1},
  pages        = {141--183},
  year         = {1973}
}

@book{Lighthill1960,
  author       = {M. J. Lighthill},
  title        = {An Introduction to Fourier Analysis and Generalised Functions},
  publisher    = {Cambridge University Press},
  year         = {1960}
}

@article{Bibinger2014LOBVol,
  author       = {Markus Bibinger and Moritz Jirak and Markus Reiss},
  title        = {Improved volatility estimation based on limit order books},
  journal      = {SFB 649 Discussion Paper},
  number       = {2014-053},
  year         = {2014},
  institution  = {Humboldt University of Berlin}
}

@article{Naes2006VolumeVol,
  author       = {Randi N{\ae}s and Johannes A. Skjeltorp},
  title        = {Order book characteristics and the volume volatility relation: Empirical evidence from a limit order market},
  journal      = {Journal of Financial Markets},
  volume       = {9},
  number       = {4},
  pages        = {408--432},
  year         = {2006}
}

@article{Cartea2023PredictableLoss,
  author       = {Alvaro Cartea and Fay{\c c}al Drissi and Marcello Monga},
  title        = {Decentralised finance and automated market making: predictable loss and optimal liquidity provision},
  journal      = {arXiv preprint},
  eprint       = {2309.08431},
  archivePrefix= {arXiv},
  primaryClass = {q-fin.TR},
  year         = {2023}
}

@misc{Castle2023DeFiOptions,
  author       = {Castle Capital Research},
  title        = {Demystifying Options: A 0 to 1 Guide about DeFi Options},
  howpublished = {\url{https://chronicle.castlecapital.vc/p/demystifying-options-0-1-guide-defi-options}},
  year         = {2023},
  note         = {Online article}
}

@misc{Polygon2022OnChainOptions,
  author       = {Polygon Labs},
  title        = {A Game of Premiums Through a Myriad of Complexities: A Deep Dive into On Chain Option Protocols},
  howpublished = {\url{https://polygontech.medium.com/a-game-of-premiums-through-a-myriad-of-complexities-a-deep-dive-into-on-chain-option-protocols-d9619fe99278}},
  year         = {2022},
  note         = {Online article}
}

@book{Gatheral2006VolSurface,
  author    = {Jim Gatheral},
  title     = {The Volatility Surface: A Practitioner's Guide},
  publisher = {Wiley},
  year      = {2006},
  address   = {Hoboken, NJ}
}

@misc{PanopticSolvesLVR,
  author = {{Panoptic}},
  title  = {Panoptic Solves {LVR}},
  year   = {2024},
  month  = {September},
  url    = {https://panoptic.xyz/research/panoptic-solves-lvr},
  note   = {Accessed July 2026}
}

\end{document}